\begin{document}

\title{Efficient Query Rewrite for Structured Web Queries}
\numberofauthors{1}
\author{
\alignauthor
Sreenivas Gollapudi, Samuel Ieong, Alexandros Ntoulas, Stelios Paparizos\\
    \affaddr{Microsoft Research, Silicon Valley}\\
    \email{\{sreenig, saieong, antoulas, steliosp\@microsoft.com}
}

\newcommand{\note}[1]{{\sc[[#1]]}}
\newtheorem{definition}{Definition}
\newtheorem{example}{Example}
\newtheorem{theorem}{Theorem}

\newcommand{\GreedyHistogram}{\textsc{Greedy-Rewrite}}
\newcommand{\DPHistogram}{\textsc{DP-Rewrite}}
\newcommand{\UnrestrictedHistogram}{\textsc{Attribute-Removal}}
\newcommand{\FuncDep}{\textsc{FD-Heuristic}}
\newcommand{\meandist}{\textsc{Mean-Dist}}
\newcommand{\RelaxedQuery}[1][m]{\ensuremath{\{a_1~:~v_1 \pm \delta_1, a_2:v_2 \pm \delta_2, \ldots, a_#1:v_#1 \pm \delta_#1\}}}
\newcommand{\Query}[1][m]{\ensuremath{\{a_1:v_1, a_2:v_2, \ldots, a_#1:v_#1\}}}
\newcommand{\WebQuery}[1]{`\texttt{#1}'}
\newcommand{\StructuredQuery}[1]{$\{$\emph{#1}$\}$}

\newcommand{\from}[2]{{\bf [{\sc from #1:} #2]}}

\pagenumbering{arabic}

\maketitle

\begin{abstract}



Web search engines and specialized online verticals are increasingly
incorporating results from structured data sources to answer semantically rich
user queries. For example, the query \WebQuery{Samsung 50 inch led tv} can be
answered using information from a table of television data. However, the users
are not domain experts and quite often enter values that do not match precisely
the underlying data. Samsung makes 46- or 55- inch led tvs, but not 50-inch
ones. So a literal execution of the above mentioned query will return zero
results. For optimal user experience, a search engine would prefer to return at
least a minimum number of results as close to the original query as possible.
Furthermore, due to typical fast retrieval speeds in web-search, a search
engine query execution is time-bound.

In this paper, we address these challenges by proposing algorithms that rewrite
the user query in a principled manner, surfacing at least the required number
of results while satisfying the low-latency constraint.  We formalize these
requirements and introduce a general formulation of the problem.  We show that
under a natural formulation, the problem is NP-Hard to solve optimally, and
present approximation algorithms that produce good rewrites.  We empirically
validate our algorithms on large-scale data obtained from a commercial search
engine's shopping vertical.

\end{abstract}

\section{Introduction}
\label{sec:introduction}


Web users are increasingly looking for information beyond the traditional
sources. This is manifested in search engines like {\tt google} and {\tt bing}
by the inclusion of answers beyond 10 page links and in the tremendous growth
of specialized search engines such as {\tt amazon}. Often the rich
experience is provided via the use of semantic information that comes
from (semi-)structured data sources in the form of tables, xml files or databases. For
example, structured data can be used to answer queries ranging such as
electronic goods (e.g. \WebQuery{50 inch samsung led tv}), fashion 
(e.g. \WebQuery{\$1600 prada handbags}), movie-showtimes listings (e.g.
\WebQuery{avatar showtimes near san francisco}), and weather
prediction (e.g. \WebQuery{weather in new york}).

A major challenge in using structured data to answer web queries is that users
often lack domain expertise and may pose queries that lead to very few or no
result due to unfamiliarity with the underlying data sources.  For example,
consider the query \WebQuery{50 inch samsung led tv}. There exists work in the
literature~\cite{LWA09, sarkas10} that can correctly classify and semantically
interpret the query to attribute-value pairs that correspond to underlying
structured attributes. So the query can be thought as ({\tt 50 inch}
$\Rightarrow$ {\tt display size}, {\tt Samsung} $\Rightarrow$ {\tt Brand}, {\tt
led tv} $\Rightarrow$ {\tt display type}). However, if the query is directly
evaluated as specified, there will be no results that can satisfy the
interpretation as Samsung does not make 50-inch LED TVs. On the other hand,
Samsung makes 46-inch and 55-inch LED TVs and 50-inch PLASMA TVs. Arguably, the
users would prefer to see such results that are close to their original query
instead of looking at an empty page with no results because they did not know
the appropriate precise values when typing the query.

The challenge is common to today's systems and not restricted to the
electronics domain but applies broadly to answering web queries in a variety of
domains including {\tt handbags} or {\tt shoes}, for example consider the query
{\tt \$1600 prada handbags}. One strategy for handling this challenge is to
\emph{rewrite the query} to broaden its coverage.  In the context of online
search, such rewrites include a variety of techniques such as query term
deletion, phrasal substitution, and mining of similar queries. In fact, the
query {\tt \$1600 prada handbags} does not return any products on {\tt
amazon} and is handled using term deletion as shown in
Figure~\ref{fig:amazon-example}. However this approach provides no quality
guarantees and does not take advantage of the rich meta-data information
available in structured data sources, thus producing results that leave a lot
to be desired to the user.

\begin{figure}[t]
\centering
\includegraphics[width=3.2in]{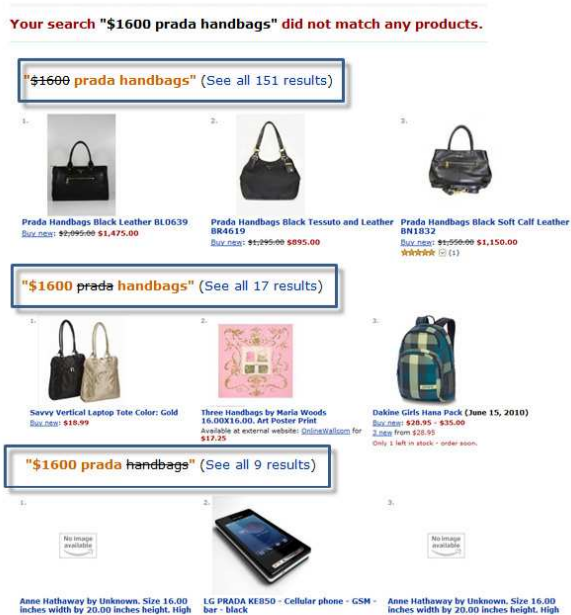}
\caption{The query {\tt \$1600 prada handbags} on {\tt amazon.com} is handled by dropping terms in the query successively and surfacing the results from each rewritten query separately}
\label{fig:amazon-example}
\end{figure}

We are interested in rewriting the queries through semantic term expansion. For
example, the above queries may be rewritten as \WebQuery{(46 to 52 inch)
(samsung or sony) (led or plasma) tv} and \WebQuery{(\$1400 to \$1800) (prada
or gucci) handbags} respectively. This query rewriting problem can be viewed as
a generalization of query rewriting through synonyms to increase recall, for
example, from \WebQuery{women shoes} to \WebQuery{(women or women's) (footwear
or shoes)}. Note that we are not interested in a set of static rewrite rules,
such as those used in synonym detection and stemming, but rather a query
rewrite algorithm that can understand the query intent and adapt accordingly.

The quality of the rewrites depends on two factors. First, the rewritten query
should preserve the meaning of the original query as closely as possible.  We
measure the fidelity of the rewrite by computing how far away the set of
results retrieved are to the original query, as estimated by user preferences
learned through click logs.  Second, the rewritten query should ensure there
are sufficiently many results returned to the user.  We measure the coverage of
the rewrite by counting how many times when a certain number of results is
requested, the expectation is met.

If efficiency had not been an issue, a candidate solution would have been to
expand the terms a little at a time, issue the rewritten query to the index,
and repeat as necessary until the minimum number of results requested is
retrieved.  This solution would not be applicable to web search, however, as
users expect results to be returned in under half a second, thus placing a strict
performance requirement on the query rewriting component. Hence, many search
engines place a restriction on the number of re-written queries (also called
query augmentations) that can be issued to the index as part of the original
query execution. To ensure this requirement is met, we require that the
techniques may only use precomputed statistics of the index but may not access
the index at run time, as index access contributes the lion's share of running
time.  Similarly, we also require that the techniques can take an input
parameter that limits the number of alternative rewrites they can examine.




In this paper, we formulate the above problem as time-bound query rewriting for
structured web queries.  As part of our contributions we formally describe an
optimization framework that takes as input a candidate query $q$, a desired
number of results $k$, and a parameter $T$ that governs how many rewrites can
be considered, and produces a rewritten query that aims to retrieve at least
$k$ results and that the results match the original query well.  We show that
finding the optimal solution to this problem is NP-Hard.  We introduce a greedy
algorithm and a dynamic programming solution that rewrite the query in a
principled and controlled fashion.  We also study the effect of functional
dependencies in the data and how they affect query rewrite. We evaluate the
proposed solution using real queries from a commercial search engine's shopping
vertical against a prototype commerce search engine.


The rest of the paper is organized as follows. In Section 2, we discuss related work.  In Section 3, we describe our
model and assumptions about structured web search, and formulate the problem of
predicate relaxation.  To meet the performance requirement, one needs to
pre-compute statistics on the database to be used at runtime.  In Section 4, we
describe two kinds of statistics---histograms and functional dependencies---and
give two heuristics for using these statistics to perform fast predicate
relaxation.  In Section 5, we report our experimental evaluation of these
heuristics conducted over data from a commercial search engine's vertical. We summarize and conclude
in Section 6.

\section{Related Work}
\label{sec:related}

Structured data is abundant on the web, and there have been studies on how to retrieve them in a manner suitable to web search~\cite{bergman2001deep, CHWWZ08}.  There is also work on how to retrieve and rank information from structured data~\cite{hristidis03efficient,CSLRV04, BCS09,KXC09, LYMC06}. When answering web queries over structured data, however, direct application of textual similarity may produce low quality results due to possible misinterpretations of data types. For example, a database might store the television diagonal as the string `50 inches' while users may type \WebQuery{50"}. To this end, recent work has studied how to analyze keyword queries as typed in a web search box and interpret them as structured queries~\cite{LWA09, sarkas10}.  These past works form the basic components over which we build our system for answering web queries using structured data.


Rewriting user queries to broaden coverage is a common technique employed by all search engines. For example, search engines routinely make spelling corrections to queries when retrieving results.  In the context of search over structured data sources, textual similarity approaches that treat the query as a bag of words will generally perform poorly.  In the example query given in the Introduction, there is no textual relaxation between \emph{Samsung} and \emph{Sony}, and little can be done for generating term expansions or substitutions for the diagonal size in a controlled manner.  Past approaches based on log mining~\cite{AGC08Simrank++,JF03Word-Deletion,JRMG07Query-Substitutions} may be able to discover relationship between terms that do not exhibit textual similarity, but they do not address how such knowledge can be exploited in conjunction with statistics of the documents to come up with good rewrites of the queries that preserve fidelity and ensure coverage.

Fontoura \textit{et al.} proposed a method to relax text queries using taxonomies~\cite{FJKOTV08}.  Their approach can also be viewed as rewriting queries taking advantage of a taxonomy created by experts, and thus solves a similar problem to ours.  However, creating a good taxonomy requires significant domain knowledge, and is an expensive process.  In our application domain, we do not have such a taxonomy available, and hence the work is not directly comparable.

These has been work in the database community that investigate the problem of keyword search over structured data~\cite{CHWWZ08,CSLRV04,hristidis03efficient,KXC09,LYMC06}.  They assume that the expansion of the keywords is handled through some probabilistic methods or captured in the ranking function, and focus on performance issues.  Like these work, we are concerned about performance issues, and capture the requirements by explicitly specifying them in our framework; our work is different in that it allows a more controlled behavior in the rewriting that provides quality guarantees.

Finally, given a query and a distance function, one can think of the problem we are trying to solve as a nearest neighbor problem. Nearest neighbor problems have been studied in the past, for example~\cite{roussopoulos-sigmod95}. More recently there are even k-nearest neighbor considerations, like~\cite{beecks-dbrank10, zhang-icde10, 293348, 335428, SK98, XMMR08}, that are applicable in the setting of searching over a database. Similar to the k-nearest neighbors, but from a join relaxation problem in databases is the work described in~\cite{KLTV06}. We find such work very valuable in relaxing the user query and finding good quality results within a reasonable distance around what was specified in the query. Although useful, the techniques described have a fundamental difference with our work. In the web search over structured data setting we need both quality guarantees with regards relaxation but at the same time we have strict performance guarantees requiring an upper time bound.  Further, these approaches admit relaxations of numeric attributes only and extensions to categorical attributes are non-trivial.  In contrast, our approaches come with two advantages - 1) they are very simple to implement; and 2) support distance functions on both categorical and numeric attributes.  In fact, we will precisely use one such distance function in our experiments and show that our algorithms perform well in practice.


\section{Problem Formulation}
\label{sec:background}

We first describe a model of structured web queries, and assumptions on how they are parsed, and how
items are evaluated with respect to the parsed queries.  We then formally define
the problem of time bound query rewrites.

\subsection{Model}

Given a keyword web query, we assume the existence of a semantic parser that
identifies the attributes requested in the query and extracts their associated
desired values, based on past work such as~\cite{LWA09, sarkas10}. For example,
the query \WebQuery{50 inch samsung led tv} is parsed as a \emph{structured query}
\StructuredQuery{table:TV, brand:Samsung, type:LED, diagonal:50}.  Denote a
generic parsed query by its attribute-value pairs, $q = \Query$.  Denote the value
of attribute $a_i$ in query $q$ by $q_{a_i}$.  For our example query, $q_{\text{brand}} = \text{Samsung}$.  Consistent with the interpretation of web
queries as conjunctions of keywords, we interpret the structured query under
the AND-semantics as well. For the rest of the paper we assume that structured queries are given to us in the form of
attribute-value pairs.  In practice, not all terms in a query will be understood by the parser.  The terms that are not understood are treated as keywords used by the ranking function as additional signals.

Let $P$ be a database of items, from which we retrieve results to serve the query.  For each item $p \in P$, we represent it as a set of attribute-value pairs $\{a_1 : w_1, a_2 : w_2, \ldots, a_n : w_n\}$, and the value of attribute $a_i$ by $p_{a_i}$.  We assume that the semantic parser will only identify attributes for which we have data, hence the query specifies the values of a subset of these $n$ attributes.  Henceforth, when a query $\Query$ is given, we only focus on the $m$ attributes mentioned.  We give an example table of the database for TVs in Table~\ref{tab:database}, which we use throughout the paper for illustration.  The size of the database will be significantly larger in practice.


As discussed in the Introduction, users may lack domain expertise and may be unfamiliar with the attribute specification of the underlying structured data.  Consider the sample query \StructuredQuery{table:TV, brand:Samsung, type:LED, diagonal:50}. For the database table in Table~\ref{tab:database}, there is no TV that matches all the requested attribute values.  Nonetheless, it is desirable that a search engine should return results that are close to the query, for example, Samsung LED TVs of 46 inches or 55 inches, or Sharp LED TVs of 52 inches.  It would be less desirable, however, if the search engine returns a Samsung LED TV of 32 inches, since that TV is much smaller than requested, or a Sony CRT TV of 50 inches, since the type of TV is significantly different than requested.

\begin{table}
\centering
\begin{tabular}{|l|l|l|l|l|}
\hline
\textbf{Brand} ($b$) & \textbf{Model} ($m$) & \textbf{Type} ($t$) & \textbf{Diagonal} ($d$) & $\ldots$ \\
\hline
Samsung	& UN46B6000	   & LED & 46 & $\ldots$\\
Samsung	& UN55B7000	   & LED & 55 & $\ldots$\\
Samsung	& UN32B6000    & LED & 32 & $\ldots$\\
Samsung	& LN55B630     & LCD & 55 & $\ldots$\\
Samsung	& PN46A550     & Plasma & 46 &$\ldots$\\
Sony    & KDL-52XBR9   & LCD & 52 & $\ldots$\\
Sony    & KDL-46EX700  & LCD & 46 & $\ldots$\\
Sony    & KD-50FS170   & CRT & 50 & $\ldots$\\
Sharp   & LC-52D85UN   & LED & 52 & $\ldots$\\
Sharp   & LC-52LE700UN & LCD & 52 & $\ldots$\\
\hline
\end{tabular}
\caption{Example database for TVs.} \label{tab:database}
\end{table}

\begin{table}
\centering
\begin{tabular}{|l|l|l|l|}
\hline
\textbf{Attribute} $i$ & $v$ & $w$ & $d_i(v, w)$ \\
\hline
\multirow{3}{*}{Brand ($b$)}
 & Samsung & Sony & $0.2$ \\
 & Samsung & Sharp & $0.3$ \\
 & $\ldots$ & $\ldots$ & $\ldots$ \\
\hline
\multirow{4}{*}{Type ($t$)}
 & LED & LCD & $0.1$ \\
 & LED & Plasma & $0.5$ \\
 & LED & CRT & $1.0$ \\
 & $\ldots$ & $\ldots$ & $\ldots$ \\
\hline
\multirow{5}{*}{Diagonal ($d$)}
 & 50 & 32 & $0.8$ \\
 & 50 & 46 & $0.3$ \\
 & 50 & 52 & $0.1$ \\
 & 50 & 55 & $0.4$ \\
 & $\ldots$ & $\ldots$ & $\ldots$ \\
\hline
\end{tabular}
\caption{Example distance function for TVs.} \label{tab:distance}
\end{table}

To make the discussion formal, denote the domain of attribute $a_i$ by $A_i$.
Let the function $d_i : A_i \times A_i \rightarrow [0, 1]$, $d_i(v, w)$ measures
the distance of attribute value $w$ from attribute value $v$.  When $d_i(v, w)$
is small, it means that attribute value $w$ is similar to attribute value $v$.
We give an example distance function for TVs in Table~\ref{tab:distance}.  We
note that our solution does not depend on assumptions such as symmetry or
triangle-inequality about the distance function.

An aggregate distance function $ad : P \times Q \rightarrow \mathbb{R}$, $ad(p, q)$ measures how well item $p$ matches query $q$.  When $ad(p, q)$ is small, it means item $p$ matches the query $q$ well.  We assume that the function depends only on the attribute values of the item and the query.  We next define a basic yet fundamental property of aggregate distance functions that we assume throughout the paper.

\begin{definition}[Monotonicity]
  An aggregate distance function, $ad(\cdot)$, is \emph{monotonic} if for any query $q = \Query$, any two items $p^1$ and $p^2$, if
  \[
    \forall 1 \leq i \leq m, \qquad d_i(v_i, p^1_{a_i}) \geq d_i(v_i, p^2_{a_i}) \enspace,
  \]
  then $ad(p^1, q) \geq ad(p^2, q)$.
\end{definition}

Monotonicity ensures that an item closer to the query in each of the attributes will also be closer to the query in aggregate distance.  This is a natural property that should be satisfied when the attribute distances determine how well an item matches a query.  Example aggregate distance functions that satisfy monotonicity includes weighted sums of the attribute distances, and $\ell_p$-norms that treat attribute distances as vectors in $m$-dimensional space.

It is possible that a search engine may choose a ranking function that does not satisfy monotonicity.  This happens when the ranking function takes into account additional sources of signals such as click activities in deciding how well an item matches a query.  This is outside of the scope of our problem formulation.

\subsection{Query Rewrite Formulation}

Given a query $q$, our ultimate goal is to find the top-$k$ items that match the query
within a fixed time window. In order to find the top-$k$ items, we must
first be able to select at least $k$ items from the database.  This may not be
possible when there are less than $k$ items that match all the desired attribute values.  The
focus of our work is on how to rewrite the query in a principled way so as to
ensure sufficiently many items are returned, keeping fidelity to the original query,
while respecting the time constraints imposed on the individual components of a search engine.

For a given attribute-value pair $a_i : v$ and value $\delta \in [0, 1]$, let $B_i(v, \delta)$ be the set of attribute values that is $\delta$-close to $v$, i.e.,
\begin{equation}
  B_i(v, \delta) = \{w \in A_i | d_i(v, w) \leq \delta\} \enspace.
\end{equation}
In other words, $B_i(v, 0)$ are the set of attribute values that are equivalent to $v$, whereas $B_i(v, 1)$ are the set of all attribute values.  For example, for the distance function in Table~\ref{tab:distance}, $B_d($`Samsung'$, 0.2)$ $= \{$`Samsung'$, $`Sony'$\}$.

Denote a \emph{relaxed query} $q$ by $\RelaxedQuery$.  A database item $p$ matches $q$ if and only if
\[
  \forall 1 \leq i \leq m, \qquad p_{a_i} \in B_i(v_i, \delta_i) \enspace.
\]



At a high level, the query rewrite for structured web query problem is to take an input query and find a relaxed query that will result in at least $k$ matches in the database.
If time had not been an issue, a simple solution would be to iteratively make small relaxation to the query, issue it to the database to find out the number of matches, and repeat until we have found $k$ results. However, due to the performance requirement imposed by web search, this approach is infeasible as database access is costly.  Indeed, an algorithm may only be able to carry out a small amount of computations within the time envelope.

To capture these limitations, we propose to bound the time of any solution by the \emph{number of different relaxed queries} it considers and include this as an explicit parameter to the problem specification.  To ensure that this meaningfully reflects the performance requirement and is helpful in differentiating among solutions, we require that the amount of time it takes to evaluate each relaxed query to be constant.  Note that different forms of evaluating the relaxation will lead to different classes of problems.  For example, evaluation via issuing the relaxed query to a database will constitute a different class of problems from evaluation via approximation by database statistics.  Indeed, in this paper, we focus on the latter form of evaluation, which we made clear in Section~\ref{sec:relaxation}.  We model the performance requirement using this abstract bound in place of an actual time parameter as the actual amount of time needed varies across systems and is dependent on the quality of the implementation.

We now give a formal definition of the problem.

\begin{definition}[Time Bound Query Rewrite]
Given:
\begin{itemize}
\item A query $q = \Query$;
\item A database of items $P = \{p^1, p^2, \ldots, p^n\}$;
\item The minimum number of items to return, $k$;
\item The maximum number of relaxed queries considered, $T$.
\end{itemize}


\noindent Find a relaxed query $q' = \RelaxedQuery$ with at most $T$ relaxed queries considered, such that the number of items that match the query $q'$, $S \subseteq P$, is at least $k$, and that the average aggregate distance among all items in $S$ from query $q$,
\begin{equation}
  ad(S, q) = \frac{1}{|S|} \sum_{p \in S} ad(p, q) \enspace, \label{eq:aggregate-distance}
\end{equation}
is minimized.
\end{definition}

\section{Statistics and Heuristics}
\label{sec:relaxation}
\newcommand{\Est}{\textsc{Est}}
\newcommand{\PredRelaxHistogram}{\textsc{Query-Rewrite-Histograms}}

To enable fast evaluation of candidate relaxed queries, one can precompute statistics on the database, and estimate the number of matches using these statistics.  We consider two statistics---histograms of attribute values and attribute dependencies estimated as conditional probability distributions---which are commonly computed in databases, and formulate a version of time bound query rewrite problem.  We then present two heuristics, one based on a greedy approach, and another based on dynamic programming, and discuss trade-offs between the two approaches.

\subsection{Statistics}

\subsubsection{Histograms}

One of the most important statistics of an attribute is the distribution of its values, termed the \emph{histogram}.  Histograms can help to provide estimate of the number of potential matches to a query without direct database access.

Formally, let the histogram of attribute $a_i$ be $h_i$, and that for a set of attribute values $V \subseteq A_i$, $h_i(V)$ returns the number of items that have the corresponding attribute value.  For example, the histogram for the brand attribute in our example database would be
\[
  h_b(\text{`Samsung'}) = 5 \qquad h_b(\text{`Sony'}) = 3 \qquad h_b(\text{`Sharp'}) = 2 \enspace.
\]

If one assumes that the attributes in the query are independent, one can estimate the number of matches to a relaxed query as follows.  For a query $q = \RelaxedQuery$, the estimated number of matches equals
\begin{equation} \label{eq:estimate}
  \Est(q) = |P| \prod_{i=1}^m \frac{h_i(B_i(v_i, \delta_i))}{|P|} \enspace.
\end{equation}
As an example, for $q = \{$brand$=$Samsung $\pm 0.2$, type$=$LED $\pm 0.2$, diagonal$=$50 $\pm 0.3$ $\}$,
\begin{align*}
  \Est(q) = &10 \Bigl(\frac{h_b(B_b(\text{`Samsung'}, 0.2))}{10}\Bigr) \Bigl(\frac{h_t(B_t(\text{`LED'}, 0.2))}{10}\Bigr)\\ &\quad\Bigl(\frac{h_d(B_d(50, 0.3))}{10}\Bigr) \\
  =& 10 (0.8) (0.8) (0.7) = 4.48
\end{align*}

When attributes are dependent, the estimate could be misleading.  Functional dependencies may help to improve the estimate.

The maximum aggregate distances of the set of selected items to a query cannot be determined by the histograms alone.  Therefore, one cannot directly optimize objective~\eqref{eq:aggregate-distance}.  Instead, we focus on bounding the aggregate distance by controlling the total amount of relaxation, and define the problem of query rewrite using histograms as follows.

\begin{definition}[\PredRelaxHistogram]
Given
\begin{itemize}
\item A query $q = \Query$;
\item Database size $|P|$;
\item Histograms $h_i$ for each attribute $a_i$;
\item The minimum number of items to return, $k$;
\item The maximum number of relaxed queries considered, $T$.
\end{itemize}


\noindent Find a relaxed query $q' = \RelaxedQuery$ with at most $T$ relaxed queries considered, such that $\Est(q')$ is at least $k$, and that the total amount of relaxation,
\begin{equation}
  tr(q') = \sum_{i=1}^m \delta_i \enspace, \label{eq:sum-delta}
\end{equation}
is minimized.
\end{definition}

Later in this section, we show that this problem is hard (even in the absence of
a limit on the number of relaxed queries considered), and propose heuristics for
solving this problem.

\subsubsection{Attribute Dependencies}

Suppose a query specifies both a brand and a model.  Consider the example database in Table~\ref{tab:database}.  If only one of the two attributes is relaxed, there will be no additional matches for the relaxed query.  Yet the estimate using Equation~(\ref{eq:estimate}), based on the assumption that attributes are independent, would erroneously estimate that the number of matches will increase after the relaxation.  To address this problem, one has to account for attribute dependencies in the database.

We start by precomputing the conditional probabilities $P(a_i = v_i | a_j =
v_j)$ for all pairs of attributes $a_i$ and $a_j$ in the database.  For query
$q = \Query$, if $P(a_i = v_i | a_j = v_j)$ is higher than some threshold, this
indicates that the attributes are dependent, and we propose to drop either
$a_i$ or $a_j$ from the query.  We believe there are good arguments for either
approach to perform better; it depends on whether we have a better distance
function for attribute $a_i$ or $a_j$.  To test the effect of attribute
dependencies, we evaluated both possible directions in our experiments.

After this preprocessing step, we apply the same techniques for
\PredRelaxHistogram\ on the modified instance.  It may be possible to use the
conditional probabilities in a finer-grained manner to further improve the
query rewriting process; we leave that for future work.

\subsection{Hardness of \PredRelaxHistogram}

The problem of \PredRelaxHistogram\ is closely related to knapsack problems, and is hard to solve optimally.

\begin{theorem}
  \PredRelaxHistogram\ is NP-hard, even in the absence of a bound on the maximum number of relaxations considered.
\end{theorem}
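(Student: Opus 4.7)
The plan is to establish NP-hardness by reduction from a knapsack-type problem, most naturally PARTITION (decision version): given positive integers $a_1, \ldots, a_n$ summing to $2S$, decide whether some subset sums to exactly $S$. The key observation is that although the objective $tr(q') = \sum_i \delta_i$ is additive in the relaxations, the feasibility constraint $\Est(q') \geq k$ is multiplicative in the histogram values. Taking logarithms converts the constraint to additive form, revealing that \PredRelaxHistogram\ has the structure of a multiple-choice knapsack: for each attribute $i$ we independently pick a relaxation level $\delta_i$ contributing $\delta_i$ to the weight we minimize and $\log h_i(B_i(v_i, \delta_i))$ to a profit we must accumulate above a threshold.

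Concretely, I would use $m = n$ attributes, one per number $a_j$. For each, I would design the distance function so only two relaxation levels are meaningful: $\delta_j = 0$, whose ball is a singleton, and $\delta_j = a_j/(2S)$, whose ball contains all values of attribute $j$. The histograms are set so the ratio $h_j(B_j(v_j, a_j/(2S))) / h_j(B_j(v_j, 0))$ equals $2^{a_j}$. Then for the set $A$ of attributes relaxed to the larger value, $\Est(q')$ evaluates to $C \cdot 2^{\sum_{j \in A} a_j}$ for a constant $C$ depending only on $|P|$ and the histograms. Choosing $k = C \cdot 2^S$ makes $\Est(q') \geq k$ equivalent to $\sum_{j \in A} a_j \geq S$, while requiring $tr(q') \leq 1/2$ is equivalent to $\sum_{j \in A} a_j \leq S$. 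Together these force equality, giving a PARTITION witness; conversely, any witness yields such a relaxed query. Because the construction makes no use of a bound on the number of relaxed queries considered, the hardness holds even when $T$ is absent.

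The main obstacle is ensuring the reduction is fully polynomial-time. A direct encoding produces histograms of value up to $2^{\max a_j}$, requiring $\Theta(\max a_j)$ bits, so it is only pseudo-polynomial in the binary size of the PARTITION instance. The cleanest remedy is to start instead from a strongly NP-hard numerical problem such as 3-PARTITION, where the numbers are polynomially bounded by the input length; one then encodes each triple-sum constraint using a small group of attributes whose combined multiplicative contribution saturates the estimate threshold only when the triple condition is met, keeping all histograms, $|P|$, and $k$ polynomial in the input size. A secondary concern is realizability of the specified histograms by an actual database, but since \PredRelaxHistogram\ is stated in terms of $|P|$ and the histograms rather than the items themselves, this is handled by a straightforward grid-like template over attribute values.
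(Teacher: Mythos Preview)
Your reduction shares the same structural core as the paper's: one attribute per input number, a two-level histogram so that relaxing attribute $j$ contributes a fixed multiplicative factor to $\Est$ and a fixed additive cost to $tr$, and the combination of the upper bound on $tr$ with the lower bound on $\Est$ forces equality. The paper, however, reduces from \textsc{Subset-Product} (SP14 in Garey--Johnson): given sizes $s(a)\in\mathbb{Z}^+$ and target $B$, is there a subset whose product equals $B$? This choice of source problem is precisely the simplification you are missing. Starting from a \emph{multiplicative} problem lets the histogram values be the input integers $s(a)$ themselves and the relaxation thresholds be $\log s(a)$; no exponentiation is required, so the construction is genuinely polynomial in the binary input. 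The only remaining subtlety is that $\log s(a)$ and $\log B$ are irrational, and the paper closes this with a short precision argument showing that $O(\log nB)$ bits suffice to keep the YES/NO answer intact after rounding.

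Your proposed repair via \textsc{3-Partition} is where the real gap lies. Your basic gadget encodes a \emph{subset selection}: each attribute is either relaxed or not, yielding a single chosen subset. But \textsc{3-Partition} asks for a partition of $3n$ items into $n$ triples of equal sum, not the choice of one subset; the phrase ``encodes each triple-sum constraint using a small group of attributes whose combined multiplicative contribution saturates the estimate threshold'' would require assignment-style variables and coupling gadgets that you have not supplied, and it is far from clear these can be realized within the single-query, product-of-independent-histograms form of \PredRelaxHistogram. Moreover, even with polynomially bounded numbers, your template still produces histogram entries of size $2^{a_j}$. The clean fix is not a stronger source problem but a better-matched one: swap \textsc{Partition} for \textsc{Subset-Product}, and your construction becomes the paper's proof essentially verbatim.
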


\begin{proof}
  We reduce \textsc{Subset-Product}, an NP-hard problem, to a decision version of \PredRelaxHistogram($c$) where we ask if there exists a relaxation for which $tr(q') \leq c$.

  The \textsc{Subset-Product} (SP14, \cite{garey-johnson79}) is as follows.  Given a finite set $A$, a size $s(a) \in \mathbb{Z}^+$ for each $a \in A$, and a positive integer $B$,  is there a subset $A' \subseteq A$ where $\prod_{a \in A'} s(a) = B$.

  We create an instance of \PredRelaxHistogram($c$) as follows.  We map each element in the finite set $A$ to an attribute.  Create a query $q = \Query$, where $m = |A|$, and a database of size $|D| > \max_{a \in A} s(a)$, the latter serves as a normalization constant for our problem.  For each item $a \in A$ with size $s(a)$, create a histogram for attribute $a$ with
  \[
    h_a(B_a(v_a, t)) =
      \begin{cases}
        1 & \text{for } 0 \leq t < \log s(a) \\
        s(a) & \text{for } t \geq \log s(a)
      \end{cases} \enspace.
  \]
  Set $k = B / |D|^{m-1}$, and the decision parameter $c = \log B$.

  The instance of \PredRelaxHistogram($c$) evaluates to YES if and only if there exists a relaxed query $q' = \RelaxedQuery$ satisfying
  \begin{align*}
    \Est(q) &= \frac{\prod_{i : \delta_i = \log s(a)} s(a) }{|D|^{m-1}} \geq \frac{B}{|D|^{m-1}} \\
    tr(q) &= \sum_{i : \delta_i = \log s(a)} \log s(a) \leq c = \log B \enspace,
  \end{align*}
  which is possible only if $\prod_{i : \delta_i = \log s(a)} s(a) = B$.

  One loose end remains is that the exact values $\log s(a)$ and $\log B$ are not representable in finite number of digits.  We need to show that the reduction continues to hold after rounding these input to some precision $\epsilon$, and that $\log (1 / \epsilon)$ is polynomial in the size of the \textsc{Subset-Product} instance.  When $\log s(a)$ and $\log B$ can have at most an error of $\epsilon$, for $tr(q)$ to be smaller than $\log B$ but not $\log (B+1)$, we need
  \begin{align*}
    (\log B + \epsilon) + n \epsilon &< (\log (B+1) - \epsilon)\\
    (n+2) \epsilon &< \log((B+1) / B) \leq 1/B \\
    \epsilon &< 1 / ((n+2) B) \enspace,
  \end{align*}
  or $\log (1 / \epsilon) = O(\log nB)$, as desired.
\end{proof}

Therefore, in order to solve the problem, we rely on heuristical approaches for solving the problem.
\eject
\subsection{Algorithms for Query Rewrite}

\subsubsection{\GreedyHistogram}

A general template for solving \PredRelaxHistogram\ is to (1) select an attribute based on some criteria, (2) relax it by a small amount $\epsilon$ to get relaxed query $q$, (3) compute the estimate $\Est(q)$, and (4) repeat as long as $\Est(q) < k$.  Different choices of selection criteria give rise to different heuristics.

In \GreedyHistogram, we select an attribute to relax based on how constraining the attribute is.  Formally, for a relaxed query $q = \RelaxedQuery$, we pick the most constraining attribute, $a_i$ where
\[
  h_i(B_i(v_i, \delta_i))
\]
is the smallest to relax.

As an example, consider again the query $q = $ \StructuredQuery{table:TV, brand:Samsung, type:LED, diagonal:50}, the target number of results be $3$, and the maximum number of relaxed queries considered be $T = 10$.  Let the step size $\epsilon = 0.1$ for all attributes.  \GreedyHistogram\ will proceed as in Table~\ref{tab:greedy-example}.  At termination, it returns the relaxed query \StructuredQuery{table:TV, brand: Samsung $\pm 0.2$, type:LED $\pm 0.1$, diagonal:50$\pm 0.3$}, which yields $3$ results in our example database.

\begin{table}
\centering
\begin{tabular}{|c|c|c|c|c|c|c|c|}
\hline
\textbf{Step} & $\delta_b$ & $\delta_t$ & $\delta_d$ & $h_b(\cdot)$ & $h_t(\cdot)$ & $h_d(\cdot)$ & $\Est$ \\
\hline
0 & 0.0 & 0.0 & 0.0 & 5 & 4 & 1 & 0.20 \\
1 & 0.0 & 0.0 & 0.1 & 5 & 4 & 4 & 0.80 \\
2 & 0.0 & 0.1 & 0.1 & 5 & 8 & 4 & 1.60 \\
3 & 0.0 & 0.1 & 0.2 & 5 & 8 & 4 & 1.60 \\
4 & 0.0 & 0.1 & 0.3 & 5 & 8 & 7 & 2.80 \\
5 & 0.1 & 0.1 & 0.3 & 5 & 8 & 7 & 2.80 \\
6 & 0.2 & 0.1 & 0.3 & 8 & 8 & 7 & \textbf{4.48} \\
\hline
\end{tabular}
\caption{\GreedyHistogram\ with $\epsilon = 0.1$.} \label{tab:greedy-example}
\end{table}

If at the end of having evaluated $T$ relaxed queries and none is found to have an estimated number of matches of at least $k$, the last relaxed query (i.e., the one with the largest amount of relaxation) is returned.

\subsubsection{\DPHistogram}

Drawing on ideas similar to the dynamic program for knapsack-style problems, we also consider a dynamic programming heuristic \DPHistogram.  For a query $q = \Query$, let
\begin{quote}
  $F(j, d) = $ Maximum fraction of products satisfying the relaxed query on attributes $a_1, \ldots, a_j$ with total relaxation $\sum_{i=1}^j \delta_i \leq d$.
\end{quote}
Let $\epsilon$ be a parameter to the heuristic that determines the step size, i.e., by what increment we increase the relaxation of an attribute.  For each cell in $F(\cdot,\cdot)$, we need to consider one new relaxation.  Therefore, for a given maximum number of relaxations $T$, we can consider only $\rho = \lfloor\frac{T}{m}\rfloor$ different values for each attribute.  Hence, we compute $F(j, d)$ using dynamic programming as described in Algorithm~\ref{alg:dp}.

\begin{algorithm}
\begin{algorithmic}
\FOR{$d \leftarrow 0, \epsilon, 2\epsilon, \ldots, \min(\rho\epsilon, 1)$}
  \STATE{$F(1, d) \leftarrow \displaystyle \frac{h_1(B_1(v_1, d))}{|P|}$}
\ENDFOR
\FOR{$j \leftarrow 2 $ to $m$}
  \FOR{$d \leftarrow 0, \epsilon, 2\epsilon, \ldots, \min(\rho\epsilon, j)$}
    \STATE{$F(j, d) \leftarrow \displaystyle\max_{d'=0,\epsilon,\ldots,\min(d,1)} \Bigl(\frac{h_j(B_j(v_j, d'))}{|P|} F(j - 1, d - d')\Bigr)$}
  \ENDFOR
\ENDFOR
\end{algorithmic}
\caption{Dynamic program for Query Rewrite Using Histograms, with step size $\epsilon$ and $\rho = \lfloor\frac{T}{m}\rfloor$.} \label{alg:dp}
\end{algorithm}

The optimal solution is given by $\min_{d'} F(m, d')$ for which the value is at least $\frac{k}{|D|}$.  The amount of relaxation for each attribute can be kept track of by an auxiliary table.

Consider again the query $q = $ \StructuredQuery{table:TV, brand:Samsung, type:LED, diagonal:50}, the target number of results be $3$, and the maximum number of relaxations considered be $T = 15$.  Let the step size $\epsilon = 0.1$, a sample execution of \DPHistogram\ is illustrated in Table~\ref{tab:dynamic-program}.  At termination, it returns the relaxed query \StructuredQuery{table:TV, brand:Samsung $\pm 0.3$, type:LED $\pm 0.1$, diagonal:50$\pm 0.1$}, which yields $3$ results in our example database.  Note that, however, if $T = 10$, then $\rho = 3$, and hence the algorithm will only be able to evaluate up to $F(3, 0.3)$, and will fail to find a relaxation.

\begin{table}
\centering
\begin{tabular}{|c|c|c|c|}
\hline
 & \textbf{Attr 1} ($b$) & \textbf{Attr 2} ($t$) & \textbf{Attr 3} ($d$) \\
\hline
$d$ & $F(1, d)$ & $F(2, d)$ & $F(3, d)$ \\
\hline
0.0 & 0.50 & 0.50 * 0.40 = 0.20 & 0.20 * 0.10 = 0.020 \\
0.1 & 0.50 & 0.50 * 0.80 = 0.40 & 0.20 * 0.40 = 0.080 \\
0.2 & 0.80 & 0.50 * 0.80 = 0.40 & 0.40 * 0.40 = 0.160 \\
0.3 & 1.00 & 0.80 * 0.80 = 0.64 & 0.40 * 0.40 = 0.160 \\
0.4 & 1.00 & 1.00 * 0.80 = 0.80 & 0.64 * 0.40 = 0.256 \\
0.5 & 1.00 & 1.00 * 0.80 = 0.80 & 0.80 * 0.40 = \textbf{0.320} \\
\hline
\end{tabular}
\caption{\DPHistogram\ with $\epsilon = 0.1$, $\rho = 15 / 3 = 5$, and $k = 3$, i.e., $\frac{k}{|P|} = 0.3$.} \label{tab:dynamic-program}
\end{table}

Similar to \GreedyHistogram, if no relaxed query with an estimated number of matches of at least $k$ is found at the end of having evaluated $T$ relaxed queries, the relaxed query with the largest amount of relaxation is returned.



\subsubsection{Trade-off Between the Heuristics}

There is a trade-off between the two heuristics described.  On the one hand, for any fixed $\epsilon$, if the maximum number of relaxed queries allowed is large, \DPHistogram\ is guaranteed to find a relaxed query $q$ with $tr(q)$ no larger than the one found by \GreedyHistogram.\footnote{Note that this does not guarantee the results returned by \DPHistogram\ is necessarily better than ones returned by \GreedyHistogram\ when measured in the objective of Equation~\eqref{eq:aggregate-distance}, since aggregate distance and total relaxation is not equivalent.}  However, when the number of relaxed queries allowed is small, \DPHistogram\ will be able to investigate solutions of only small total amount of relaxation, and fails to find a solution when \GreedyHistogram\ may succeed.   We explore this trade-off more fully in the experiments.

\section{Experimental Evaluation}
\label{sec:experiments}

In this section, we study the behavior and performance of our algorithms on
effectively rewriting real user queries.
\eject
\subsection{Experimental Setup}

For our experimental evaluation we built a prototype search engine and we
populated it with real data from the shopping vertical of a commercial search
engine. To this end, we downloaded the detailed descriptions for about 5
million products related to 73 categories about electronics (such as
Televisions, Equalizers, GPS Receivers, etc.) from~\cite{msn-shopping}. Each
product is provided in structured form with its attributes clearly specified
like~\cite{msn-shopping-product}. We indexed the product details and computed
the histograms for the attributes as described in Section~\ref{sec:relaxation}.

As our query set we used a random sample of one thousand queries of a
major commercial search engine's log that were provided to us.  We
selected the queries that were directed to the categories described
above and for which we extracted attribute value pairs to form the
corresponding structured queries. We use well-known techniques~\cite{LWA09, sarkas10} to extract the attribute information from the queries. The categorization and translation to structured queries was verified manually to be correct.

We ran all thousand queries through our system and we selected the ones that
triggered a query rewrite because they return too few
(less than $k$) results. Out of the thousand queries, 343 would benefit
from query rewrites. Since the queries were a random sample of queries
targeted towards the structured data that we have available, on
average approximately 34\% of such queries could potential benefit. In the remainder of this section, we use these
343 queries as our query set to evaluate in depth our techniques.

\subsubsection{Comparison Method}

As observed in the Introduction, for queries that trigger very few results, \texttt{amazon.com} rewrites the query by dropping words from the query.  To take advantage of the semantics parser, instead of dropping words from the query, we implemented a version that removes attributes from the structured interpretation of the query.  The attribute to remove is selected based on which attribute is the most constraining.  We compare our method to this approach which we termed \UnrestrictedHistogram.  We present its performance in Section~\ref{ssec:num-steps}.  Note that there is no parameter to tune for this algorithm.

\subsubsection{Distance Function} \label{sssec:distance-function}

Within our prototype search engine, we also implemented a distance function to
be used for ranking and evaluating our results after query rewrite. As our
aggregate distance function $ad(\cdot)$ we considered the average distance of
the query to the items in our data set. More specifically, for a given query
$q=\Query$ and an item $p$, $ad(p, q) = \frac{1}{m}\sum_{i} d_i(v_i, p_{a_i})$,
where $d_i(v_i, p_{a_i})$ is the individual distance between the $q$ and $p$
for attribute $a_i$.

One natural definition of distance (or similarity) between attribute values is based on the notion of {\em substitutability}, i.e., the likelihood of a user substituting her desired attribute value $v$ (specified in the query) by eventually choosing a product with a different attribute value $v'$.  For example, a user looking for a \emph{nikon} digital camera is much more likely to substitute the brand for another well-recognized brand such as \emph{canon} rather than an obscure one like \emph{yashica}.  Thus, the distance between {\em nikon} and {\em canon} is expected to be smaller than that between {\em nikon} and {\em yashica}.  Similar intuition holds for a numerical attribute as well.  Consider a user buying a $32 inch$ lcd tv.  She is more likely to eventually buy a $36 inch$ than a $60 inch$ lcd tv.

In our implementation, we define $d_i$ as the normalized distance of the two attribute values when they are numeric, i.e., $d_i(v_i, p_{a_i}) = \min(1.0, \frac{|v_i - p_{a_i}|}{|v_i|})$. For categorical attributes, we compute this distance measure using a methodology similar to the one described in \cite{PG11} based on browsed trails originating from search engines.  As these distances are based on search logs, certain attribute values appear very rarely, leading to no estimate for certain pairs of attribute values.  For example, for the attribute \emph{model}, distances between pairs of \emph{model numbers} could be missing.  In such cases, we take the conservative position that the missing distances to be the maximum possible distance of $1$.


For our performance metric \meandist, we will use the mean distance (as captured by $ad(\cdot)$) over all items in our result set, i.e. we will use Equation~\eqref{eq:aggregate-distance}.  To penalize for the cases where the algorithm fails to find at least $k$ results, which could happen due to poor estimates that overestimates the number of matches of a relaxed query, or an algorithm having attempted $T$ different relaxed queries, we treat any shortfall as having retrieved documents that are at a maximum possible distance of $1$ from the query.  Under this \emph{penalty}, an algorithm that finds a relaxed query that obtains at least $k$ results will do better than one that does not.

Finally, for the experiments presented in this section we set the number of returned results $k=10$.

\subsection{Varying the Step Size}

We start our experimental evaluation by studying the effect of the step size
$\epsilon$ in the performance of our query rewrite algorithms.  Both
\GreedyHistogram\ and \DPHistogram\ use a parameter $\epsilon$ that determines
the amount of relaxation of an attribute at a step of the algorithm.
Intuitively, for small $\epsilon$, we are making smaller, more careful steps
when relaxing so we expect that the furthest item will be quite close to the
$k^{th}$ item. On the other hand, if $\epsilon$ is large, we are relaxing more
aggressively and we may identify significantly more than $k$, and thus our
performance metric may be worse.

To study this effect in more detail, we evaluated our algorithms over our data
and we plot the graphs shown in Figure~\ref{fig:greedy-perf-over-epsilon} for
\GreedyHistogram\ and in Figure~\ref{fig:dp-perf-over-epsilon} for
\DPHistogram. The results for \UnrestrictedHistogram\ is not affected by the step size $\epsilon$ or the number of steps $T$.  The data is shown in Figure~\ref{fig:algs-perf-over-timesteps} and is not shown in Figures~\ref{fig:greedy-perf-over-epsilon} and \ref{fig:dp-perf-over-epsilon} for presentation clarity.

The algorithms were allowed upto a total of 20 steps, which ensured that they
would consider rewrites that would return at least $k$=10 results. The horizontal axis shows increasing values of $\epsilon$ and the vertical axis shows the average \meandist\ at a given $\epsilon$ value. Lower values in the vertical
axis indicate better performance.

In the case of \GreedyHistogram, we observe that increasing step sizes
lead to a larger value under our performance metric, i.e., worse results. As our algorithms become
more aggressive (increasing $\epsilon$) they allow for the result set to grow
much larger than $k$ and thus \meandist\ increases. Of course, smaller
$\epsilon$ values imply better performance but at the cost of requiring more
steps until completion.

The picture for \DPHistogram\ is more complicated. When the number of steps is very few, it faces a trade-off in choosing the step size.  When the step size is small, \DPHistogram\ fails to find rewrites that retrieve at least $k$ results, leading to poor performance as it is penalized for the shortfall; when the step size is large, \DPHistogram\ finds rewrites that obtains at least $k$ results, but now with large total amount of relaxation across all attributs.  Hence, we see a U-shaped curve for small number of steps.  When the number of steps is large, the performance of \DPHistogram\ is closer to monotonically increasing in step sizes, as relaxations of at least $k$ results are found for any step sizes, and hence smaller step sizes lead to better performance.  Indeed, we see that the three curves for number of steps $= 12, 16, 20$ overlaps one another, indicating that the same rewrite is found.  The small dip from $\epsilon=0.2$ to $\epsilon=0.3$ is due to a couple of queries where the best relaxation is by rewriting an attribute to include values that are $0.3$ (and $0.9$) away, whence for $\epsilon=0.2$ these attributes have to include values that are $0.4$ (and $1.0$) away.

In both cases, we found that $\epsilon=0.1$ gives a reasonable performance for our practical setting for the number of steps $T > 2$, so we will use this value for the remainder of our experiments. We also observe that, overall, \DPHistogram\ performs better than \GreedyHistogram\ because of the fact that it can keep a tab on the best rewrite among all the candidate rewrites it has explored for any given $T$.

\begin{figure}[t]
\epsfig{file=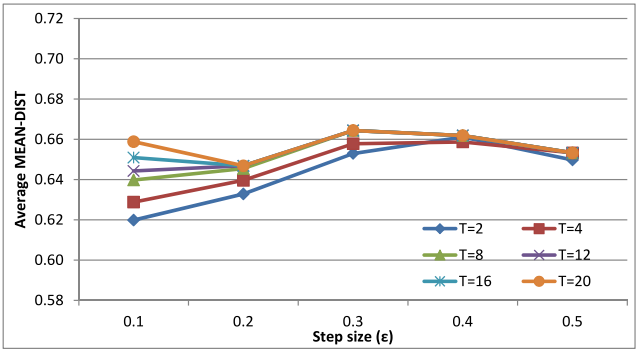, width=\columnwidth}
\caption{Effect of step size $\epsilon$ to the distance of furthest result for \GreedyHistogram.}
\label{fig:greedy-perf-over-epsilon}
\end{figure}

\begin{figure}[t]
\epsfig{file=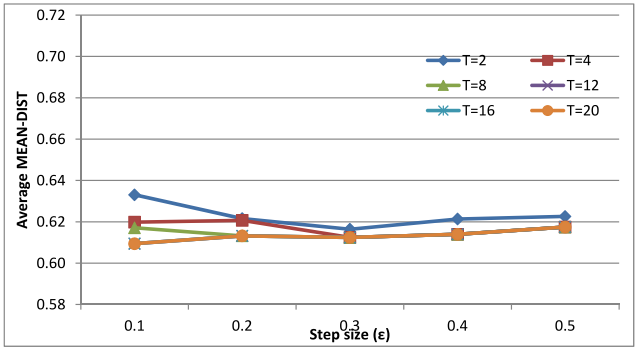, width=\columnwidth}
\caption{Effect of step size $\epsilon$ to the distance of furthest result for \DPHistogram.}
\label{fig:dp-perf-over-epsilon}
\end{figure}

\subsection{Varying Number of Steps} \label{ssec:num-steps}

We now turn to study the performance of our algorithms in terms of the amount
of steps that is allocated to them. We fixed the step size to 0.1 and look at
different step values. At a high level, we assume that, on average, each query
rewrite estimation will take approximately the same time to be computed.  To
this end, we ran all three algorithms over our data set and we compared their performance which is shown in Figure~\ref{fig:algs-perf-over-timesteps}.  In the figure, the horizontal axis is the number of steps, and the vertical axis is the average \meandist\ at a given number of steps.

\begin{figure}[t]
\epsfig{file=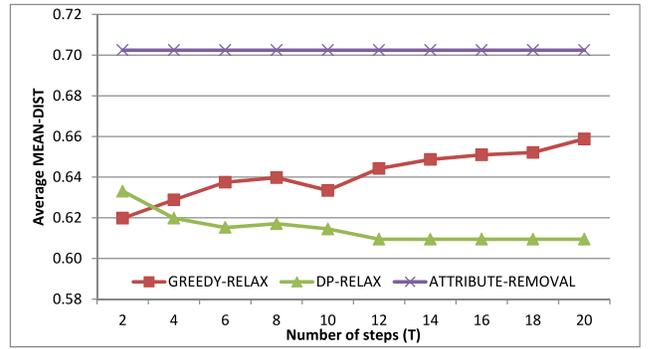, width=\columnwidth}
\caption{Average \meandist\ after a given set of time steps for all the algorithms for step size $\epsilon$$=$0.1}
\label{fig:algs-perf-over-timesteps}
\end{figure}

The first observation is that both \GreedyHistogram\ and \DPHistogram\ perform substantially better than \UnrestrictedHistogram.  The second observation is that a larger of number steps does not necessarily translate to a better performance.  This may appear counter-intuitive as one would assume that with more steps, the relaxation algorithm would get to ``explore'' the attribute space more fully to arrive at the right attribute combinations to relax.  For \GreedyHistogram, however, this needs not be the case.  This is because in cases where the estimation routine underestimates the number of results, \GreedyHistogram\ will continue to relax beyond the point necessary, leading to a set of results with higher \meandist, whereas a run with fewer number of steps will terminate with a relaxed query that it returns due to exhaustion of number of steps but lucks out in being one that retrieves sufficient number of results, leading to lower \meandist.  Indeed, the performance of \GreedyHistogram\ deteriorates after $10$ steps since the additional relaxation of the attributes only results in adding more unrelated results to the result set.

In contrast, for \DPHistogram, increasing the number of steps leads to steady improvements in \meandist.  While in principle \DPHistogram\ may be plagued by the aforementioned problem for \GreedyHistogram\ due to underestimation, because it explores the space of relaxed queries more completely, it is less affected by poor estimation compared to \GreedyHistogram. Nonetheless, by around $12$ steps, the quality of the results do not improve any further as it starts to find exactly the same relaxed query.



\begin{figure}[t]
\epsfig{file=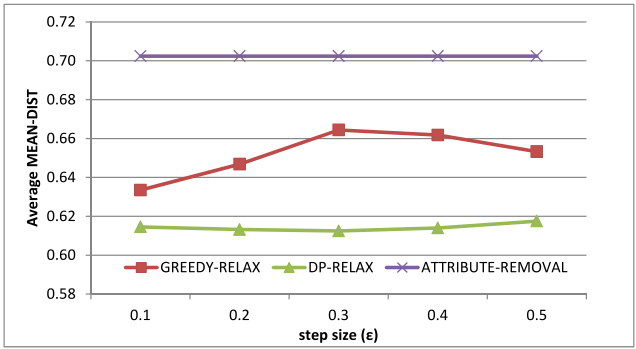, width=\columnwidth}
\caption{Average \meandist\ for different step sizes for all the algorithms for number of steps $T=10$}
\label{fig:algs-perf-over-epsilon}
\end{figure}

Finally, Figure~\ref{fig:algs-perf-over-epsilon} summarizes the relative
performance of all three algorithms for different values of the step size
$\epsilon$ fixing the number of steps $T=10$.  Again, we observe that both
\GreedyHistogram\ and \DPHistogram\ outperform \UnrestrictedHistogram.

\subsection{Testing for Attribute Dependencies}
\label{sec:ad}
As we discussed in Section~\ref{sec:relaxation}, one preprocessing step that we may apply to our algorithms is to identify attribute dependencies and drop dependent attributes from the query before rewriting it. Attribute (or functional) dependencies are very useful in optimizing queries in database systems as they can capture the relations between attributes. Our high-level intuition is that if attribute dependencies are present in our data set, it would help to take it into consideration as these dependencies point to \emph{dependence} across attributes, hence accounting for them can help with estimation, which in turn helps to find better relaxed queries. To study the presence and effect of attribute dependencies to our algorithms, we computed the conditional probabilities for all pairs of attributes and we kept only those that were higher than $0.9$.

Given a pair of attributes $a$ and $b$ where $P(a = v|b = w)$ for a large number of pairs of attribute values $v$ and $w$, we need to decide whether we should drop attribute $a$ or $b$ from the query before relaxing.  As we discussed in Section~\ref{sec:relaxation} the best choice depends on the distribution of values and the distance function for attribute $a$ or $b$. For example, if $a$ is more selective (i.e. appears in less tuples) than $b$, it may be better to drop attribute $a$ as it is expected to relax the query more than if we dropped $b$. On the other hand, dropping $b$ may also help since tuples it appears in are already partially implied by $a$.

\begin{figure}[t]
\epsfig{file=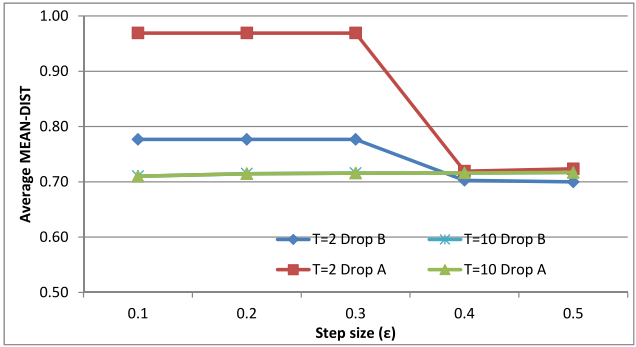, width=\columnwidth}
\caption{Average \meandist\ when for two dependent attributes $a\rightarrow b$ ($a$ implies $b$), $a$ is dropped and $b$ is dropped for different values of $\epsilon$}
\label{fig:fd-greedy-perf-over-epsilon}
\end{figure}

To this end, we repeated the experiment for identifying a good step size with the attribute-dependency preprocessing enabled. We computed the results for both alternatives for dropping an attribute (that is, either $a$ or $b$).  We report the results in Figure~\ref{fig:fd-greedy-perf-over-epsilon} for the \GreedyHistogram\ algorithm using small and large values of $T$ ($T=2$ and $T=10$) respectively.  The findings for the \DPHistogram\ algorithm are similar.


The overall result is surprising, as we find that either approach of incorporating attribute dependencies by dropping attribute $a$ or $b$ have not led to better performance, and in some cases even worse performances.  To understand this better, we perform a query-by-query analysis of the results, and found that the problem manifests itself due to a complex chain of interactions.  First, a significant fraction of these queries are \WebQuery{<brand> <model> query}. The attribute dependencies we found are also between attribute \emph{brand} and \emph{model}, where each model is associated with a unique brand.  As mentioned in Section~\ref{sssec:distance-function}, we do not have many distances estimated between models due to data sparsity.  When the attribute \emph{model} is dropped, we retrieve a number of different models of the same brand, all of which are considered to be quite far away from the query as we treat missing distances as $1$.  When the attribute \emph{brand} is dropped, the situation is even worse as the algorithm will now relax the attribute \emph{model} to close to distance $1$ in order to find sufficient number of results due to missing distances.  Hence, in such cases, performances are worse than not dropping attribute at all, as the results are now no longer constrained by \emph{brand}.



\subsection{Index Performance}
In another experiment, we measured the work done by the index in terms of the
number of documents processed by the index.  The processing done by the index
typically includes computing ranking features and scoring the document for the
given query.  As the processing takes time,  one would like the number of
documents processed by the index close to the documents estimated by the
rewrite algorithm.  Figure~\ref{fig:new-algs-numhits-over-numsteps} illustrates
the performance of the algorithms in terms of processing done by the index for
step size $\epsilon$$=$0.1.

\begin{figure}[t]
\epsfig{file=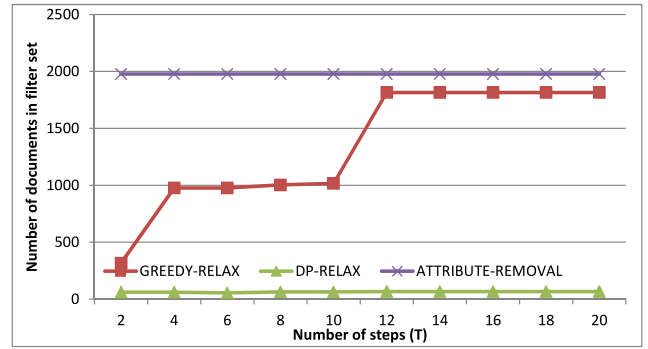, width=\columnwidth}
\caption{The median of the number of results processed by the index using all three algorithms for different values of $T$ and for step size $\epsilon$$=$0.1}
\label{fig:new-algs-numhits-over-numsteps}
\end{figure}

The general trend is that \DPHistogram\ produces rewrites that give close to the desired number of results of $k = 10$, and generates the least work for the index among the three algorithms.  On the other extreme, \UnrestrictedHistogram\ produces rewrites that generate the most work for the index due to its choice of removing the chosen attribute completely.  \GreedyHistogram\ spans the performance gap between these two algorithms. For lower values of $T$, it results in smaller number of documents to be included in the filter set while at the higher values of $T$, it comes close to \UnrestrictedHistogram\ in terms of the number of documents admitted into the filter set. The reason for the behavior exhibited by \GreedyHistogram\ is as follows.  As \GreedyHistogram\ explores one attribute at a time, and chooses its next step based on its current relaxed query, it performs a depth-first-like search through the space of relaxed queries.  In many cases, due to its choice in prioritizing the relaxation in favor of the most selective attribute, it ends up repeatedly relaxing the same attribute leading to completely relaxing an attribute.  These type of relaxed queries typically leads to retrieving significantly more number of results.  Note however that the result set may still have similar average quality as measured by \meandist, as confirmed by the figures in the previous sections.

\section{Conclusion}
\label{sec:conclusion}

In this paper we propose a query-rewrite framework for answering structured web queries when users pose queries that would have led to very few results.  Our framework takes into account the stringent time requirement of answering web queries, and balances it with the need of retrieving results close to the user queries.  We describe two approaches to solving this problem, and show experimentally that both solutions produce meaningful results given our constraints.

After studying the performance of the three algorithms with respect to parameters like step size and the number of rewrites to explore, we conclude that if time envelope admits more rewrites, then \DPHistogram\ is more applicable.  In the case of extremely small latency restrictions, \GreedyHistogram\ is a better choice.

The approaches proposed in this paper is especially important in domains where there is an underlying source of structured data, but for which users lacking domain expertise may end up issuing queries that have few or even zero matches.  This contributes to the growing literature on how to efficiently surface structured results in response to web queries.



\end{document}